\newcommand{\Rset}{\mathbb{R}}
\newtheorem{thm}{Theorem}
\newtheorem{lem}{Lemma}
\newtheorem{cor}{Corollary}
\newtheorem{cla}{Claim}
\begin{document}

%*******TITLE AND AUTHORS*******************************************
\title{Computational complexity of the recoverable robust shortest path problem in acyclic digraphs}

\author[1]{Adam Kasperski}
\author[1]{Pawe{\l} Zieli\'nski}

\affil[1]{
Wroc{\l}aw  University of Science and Technology, Wroc{\l}aw, Poland\\
            \texttt{\{adam.kasperski,pawel.zielinski\}@pwr.edu.pl}}

\date{}

\maketitle

 \begin{abstract}
 In this paper, the recoverable robust shortest path problem in acyclic digraphs is considered. The interval budgeted uncertainty representation is used to model the uncertain second-stage costs. 
 %The computational complexity of this problem remains open to date. 
 In this paper, we prove that the problem is strongly NP-hard even for the case of layered acyclic digraphs
 and it  is not approximable 
  within $2^{\log^{1-\epsilon} n}$ for any $\epsilon>0$, unless NP $\subseteq$ DTIME$(n^{\mathrm{poly} \log n})$ in general
  acyclic digraphs for the continuous budgeted uncertainty.
 We also show that for the discrete budgeted uncertainty, the problem is not approximable unless P=NP
 in layered acyclic digraphs.
  \end{abstract}
  
\noindent \textbf{Keywords}: robust optimization, interval data, recovery, shortest path, computational complexity

\section{Problem formulation}

We are given a directed acyclic graph $G=(V,A)$, where $|V|=n$ and $|A|=m$. Two nodes in~$V$ are distinguished as a source $s\in V$ and a sink $t\in V$. Let $\Phi$ be the set of all $s$-$t$ paths in $G$ ($X\in \Phi$ is the set of arcs forming the path). For each arc $e\in A$, a first-stage cost $C_e\geq 0$ is specified. The second-stage arc costs are uncertain. We assume they belong to an uncertainty set $\mathcal{U}\subseteq \Rset_{+}^m$. In the recoverable robust shortest path problem, we compute a first-stage path $X\in \Phi$. Then, after a second-stage cost scenario $\pmb{c}\in \mathcal{U}$ is revealed, the path $X$ can be modified to some extent. That is, we can choose a second-stage path $Y$ in some prescribed neighborhood $\Phi(X,k)$ of $X$, where $k\geq 0$ is a given recovery parameter. We consider three types of neighborhoods, called arc-inclusion, arc-exclusion, and arc-symmetric difference, respectively:
\begin{align}
\Phi^{\mathrm{incl}}(X,k)&=\{Y\in\Phi \,:\,  |Y\setminus X|\leq k\}, \label{incl}\\
\Phi^{\mathrm{excl}}(X,k)&=\{Y\in\Phi \,:\,  |X\setminus Y|\leq k\}, \label{excl}\\
\Phi^{\mathrm{sym}}(X,k)&=\{Y\in\Phi \,:\,  | (Y\setminus X) \cup (X\setminus Y) |\leq k\} \label{sym}.
\end{align}
The recoverable robust shortest path problem is defined as follows:
 \begin{equation}
\textsc{Rec Rob SP} : \; OPT=\min_{X \in \Phi} \left ( \sum_{e\in X} C_e  + 
\max_{\pmb{c}\in \mathcal{U}}\min_{Y\in \Phi(X,k)} \sum_{e\in Y} c_e \right).
\label{rrsp}
\end{equation}

In this paper, we use the interval uncertainty representation for the uncertain second-stage arc costs. That is, for each arc $e\in A$ we specify a nominal second-stage cost $\hat{c}_e$ and the maximum deviation $\Delta_e$ of the second-stage cost from its nominal value. Hence, $c_e\in [\hat{c}_e, \hat{c}_e+\Delta_e]$, $e\in A$, and
\begin{align}
& \mathcal{U} =\{\pmb{c}\in \Rset_{+}^m\,:\, c_e\in [\hat{c}_e, \hat{c}_e+\Delta_e], e\in A\}.
\label{intset}
\end{align}
To control the amount of uncertainty, we consider two budgeted versions of $\mathcal{U}$, namely the discrete~\cite{BS04} and continuous~\cite{NO13} budgeted uncertainty sets, defined as follows:
\begin{align}
& \mathcal{U}(\Gamma^d)=\{\pmb{c}\in \mathcal{U} \,:\, |\{e\in A\,:\, c_e>\hat{c}_e\}|\leq \Gamma^d \},
\label{intsetgd}\\
& \mathcal{U}(\Gamma^c)=\{\pmb{c}\in \mathcal{U} \,:\, \sum_{e\in A} (c_e-\hat{c}_e) \leq \Gamma^c\}.
\label{intsetgc}
\end{align}The parameters $\Gamma^d$ and $\Gamma^c$ are called budgets. The discrete budget $\Gamma^d\in \{0,\dots,m\}$ denotes the maximum number of the second-stage costs that can differ from their nominal values. On the other hand, the continuous budget $\Gamma^c\geq 0$ denotes the maximum total deviation of the second-stage costs from their nominal values. Both uncertainty sets are commonly used in robust optimization as they allow us to reduce the price of robustness of the solutions computed. They reduce to $\mathcal{U}$ when $\Gamma^c=\Gamma^d=0$.

The concept of recoverable robustness was first proposed in~\cite{LLMS09}. The recoverable robust version of the shortest path problem was first investigated in~\cite{B11, B12}. The problem is known to be strongly NP-hard for general digraphs and the interval uncertainty set $\mathcal{U}$. This hardness result remains true for all three types of neighborhoods~(\ref{incl})-(\ref{sym}) (see, ~\cite{B12, NO13, SAO09, JKZ24}). When the input graph $G$ is acyclic, the problem can be solved in polynomial time for the uncertainty set $\mathcal{U}$ and all neighborhoods~(\ref{incl})-(\ref{sym}). This result has recently been shown in~\cite{JKZ24}. 
It turns out that
adding budgets to~$\mathcal{U}$
 increases the computational complexity of the problem. Indeed,
it has been shown that the problem is weakly NP-hard even for a very restricted class of digraphs, namely, arc series-parallel ones~\cite{GLW22}. In this paper, we show that the problem becomes substantially more challenging for more general digraphs.

\section{Computational complexity results}

Consider continuous budgeted uncertainty. 
\begin{thm}
\label{thm1}
The \textsc{Rec Rob SP} problem is strongly NP-hard in layered digraphs with  any neighborhood (\ref{incl})-(\ref{sym}),
under $\mathcal{U}(\Gamma^c)$,  
 if $k$ is part of the input.  
  \end{thm}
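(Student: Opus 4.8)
The plan is to give a polynomial-time reduction from a strongly NP-complete problem, taking \textsc{3-Sat} as the source, while keeping every numerical parameter ($C_e$, $\hat{c}_e$, $\Delta_e$, $\Gamma^c$, and $k$) bounded by a polynomial in the instance size, so that the resulting hardness is strong rather than weak. A useful preliminary observation is that in a layered digraph every $s$-$t$ path uses exactly one arc between each pair of consecutive layers, so all paths have the same cardinality $\ell$. Hence for any $X,Y\in\Phi$ we have $|X\setminus Y|=|Y\setminus X|$ and $|(Y\setminus X)\cup(X\setminus Y)|=2|X\setminus Y|$, which gives $\Phi^{\mathrm{incl}}(X,k)=\Phi^{\mathrm{excl}}(X,k)$ and $\Phi^{\mathrm{sym}}(X,2k)=\Phi^{\mathrm{incl}}(X,k)$. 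It therefore suffices to prove hardness for a single neighborhood, say the arc-inclusion one, and the statements for the other two follow by rescaling $k$ (which preserves the polynomial bound).

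Given a \textsc{3-Sat} formula over variables $x_1,\dots,x_p$ and clauses $C_1,\dots,C_q$, I would build the layered graph from two consecutive blocks. The first block is an \emph{assignment} gadget: for each variable it contains a pair of parallel sub-paths, so that the choice made by a path encodes a truth value of $x_i$. The second block is a \emph{verification} gadget assembled from the clauses, whose arcs are the ones the adversary is allowed to attack (they carry the nonzero deviations $\Delta_e$), while the assignment arcs are made first-stage-cheap and deviation-free. The first-stage costs $C_e$ and the nominal second-stage costs $\hat{c}_e$ are chosen symmetrically so that picking any assignment is free, and the recovery bound $k$ is tuned so that the recourse may reroute freely inside the verification block but cannot undo the assignment fixed in the first block.

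With this layout, once $X$ is fixed the inner expression reduces to a game in which the adversary distributes a total deviation $\Gamma^c$ over the clause arcs and the recourse chooses, for each clause, how to pass through the verification gadget subject to changing at most $k$ arcs. The parameters are set so that a clause contributes no extra cost exactly when the first-stage assignment satisfies it, while an unsatisfied clause lets the adversary concentrate budget on that clause's gadget in a way the bounded recourse cannot escape. One then fixes a threshold $T$ and argues the equivalence: if the formula is satisfiable, the path encoding a satisfying assignment keeps the worst-case recoverable cost at most $T$ against every scenario $\pmb{c}\in\mathcal{U}(\Gamma^c)$; conversely, if $\mathrm{OPT}\le T$, the first-stage choice reads off a satisfying assignment, since any falsified clause would allow the adversary to force cost exceeding $T$.

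The main obstacle is the analysis of the coupled $\max_{\pmb{c}}\min_{Y}$ inner problem: unlike a plain path constraint, the consistency between the assignment encoded in the first block and the routes available in the verification block must be enforced entirely through the interaction of the continuous budget $\Gamma^c$ with the recovery limit $k$. The delicate part is calibrating $\Delta_e$, $\Gamma^c$, and $k$ so that, against a satisfying assignment, every admissible budget allocation is absorbed by a cheap rerouting within $k$ arcs, while against a falsified clause no rerouting within $k$ arcs avoids the penalty. Establishing this clean gap, while keeping all numbers polynomially bounded so that the reduction certifies strong NP-hardness, is where the bulk of the technical work lies.
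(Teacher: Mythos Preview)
Your opening observation is correct and worth keeping: in a layered digraph all $s$--$t$ paths have the same length, so the three neighborhoods coincide up to doubling~$k$, and it suffices to treat one of them.

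However, what follows is only an outline, and the outline as stated has a structural problem. You propose two \emph{sequential} blocks in a layered digraph: an assignment block followed by a verification block, with $k$ tuned so that the recourse path~$Y$ may reroute freely inside the verification block but must keep $X$'s choices in the assignment block. The difficulty is that in a layered digraph the only information a path carries from the first block into the second is the single interface node at which the blocks meet; the verification gadgets therefore cannot ``see'' the assignment. Concretely, the neighborhood constraint $|Y\setminus X|\le k$ couples $Y$ to $X$ only through \emph{shared arcs}. If the assignment arcs and the clause arcs live in disjoint layers, then forcing $Y$ to reuse $X$'s assignment arcs does nothing to restrict $Y$'s routing through any clause gadget, and your claimed dichotomy (``a clause contributes no extra cost exactly when the assignment satisfies it'') has no mechanism behind it. You would need the clause gadgets to physically share arcs with the assignment paths, but then the two-block layered layout collapses. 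You acknowledge yourself that calibrating $\Delta_e,\Gamma^c,k$ to produce the clean gap is ``where the bulk of the technical work lies''; as it stands, no construction has been given and the sketch does not indicate how the gap could be realised in a layered digraph.

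For comparison, the paper avoids this obstacle by reducing from \textsc{Hamiltonian Path} rather than \textsc{3-Sat}. It builds a layered digraph with $n$ parallel ``vertical'' $s$--$t$ paths (the candidate recovery paths) linked by ``diagonal'' arcs that mirror the arcs of the input digraph~$G$. The first-stage path~$X$ has zero cost regardless of its shape; what matters is \emph{how many} of the $n$ vertical paths it touches, since a vertical path lies in $\Phi^{\mathrm{incl}}(X,2n)$ exactly when $X$ shares at least one arc with it. The adversary's optimal move is then to split~$\Gamma^c$ evenly over the vertical paths in the neighborhood, so the inner value is $\Gamma^c$ divided by the number of vertical paths reachable from~$X$. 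A Hamiltonian path in~$G$ corresponds to a first-stage path touching all $n$ verticals (cost $\Gamma^c/n$); otherwise at most $n-1$ are touched (cost at least $\Gamma^c/(n-1)$). The coupling between the first-stage choice and the recourse options is thus achieved through \emph{intersection} of $X$ with the recovery paths, not through sequential blocks, and this is precisely the mechanism your sketch is missing.
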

\begin{proof}
We will show a reduction from the \textsc{Hamiltonian Path} problem, which is known to be strongly NP-complete
(see, e.g.,~\cite{GJ79}).
\begin{description}
\item{\textsc{Hamiltonian Path}}
\item{Input:} Given a directed graph $G=(V,A)$. 
\item{Question:}  Does $G$ have a Hamiltonian path, that is, a simple directed path that visits all nodes of $G$ exactly once?
\end{description}

 Let $G=(V,A)$, $|V|=n$, $|A|=m$, be an instance of  \textsc{Hamiltonian Path}. We build the corresponding instance of \textsc{Rec Rob SP} as follows. A graph~$G'$
  is constructed in the same way as in~\cite[the proof of Theorem~2]{AL04},
 where the min-max regret shortest path problem under the interval uncertainty is considered.
 It has $2n$ copies of the nodes in $V=\{v_1,\dots,v_n\}$, labeled as $v_i^j$, $i\in [n]$, $j\in [2n]$, and two additional nodes denoted by $s$ and $t$. The arcs of $G'$ are formed as follows. The \emph{vertical} arcs connect $v_{i}^j$ with $v_{i}^{j+1}$ for $i\in [n]$, $j\in [2n-1]$, node $s$ with $v_i^1$, $i\in [n]$, and nodes $v_i^{2n}$ with $t$, $i\in [n]$. The vertical arcs form $n$ disjoint vertical $s$-$t$ paths $P_1,\dots, P_n$ in $G'$, all of which have exactly $2n+1$ arcs.  The \emph{diagonal} arcs connect $v_i^{2j}$ with $v_{k}^{2j+1}$ for $(v_i, v_k)\in A$, $j\in [n-1]$.  
 A sample construction is shown in Figure~\ref{fig1}. 
 \begin{figure}[htbp]
 \centering
\includegraphics[keepaspectratio,width=0.7\textwidth]{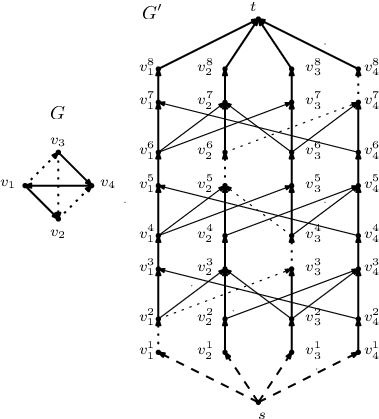}
 \caption{A sample digraph $G$ and the corresponding digraph $G'$.}\label{fig1}
 \end{figure}
 Set the budget~$\Gamma^c=1$, and the recovery parameter $k=2n$ for arc-inclusion (arc-exclusion neighborhood) and $k=4n$ for the arc-symmetric difference neighborhood.
 All the arcs in $G'$ have first-stage costs equal to~0. The second-stage costs of the vertical arcs are in the interval $[0,M]$, and the second-stage costs of the diagonal arcs are in the interval $[M,M]$, where $M$ is a constant 
 such that $M>\Gamma^c$.  We will show that $G$ has a Hamiltonian path if and only if there is a solution to \textsc{Rec Rob SP} in $G'$ with the total cost at most~$\frac{\Gamma^c}{n}$.

 $\Rightarrow$ Let $P$ be a Hamiltonian path that visits the nodes of $G$ in order $v_{i_1},\dots, v_{i_n}$. Choose a path~$X$ in $G'$ that contains the arcs 
 $$(s,v_{i_1}^1), (v_{i_1}^1, v_{i_1}^2), (v_{i_1}^2, v_{i_2}^3), (v_{i_2}^3, v_{i_2}^4), (v_{i_2}^4, v_{i_3}^5),\dots, (v_{i_n}^{2n},t).$$
 These arcs exist by the construction of $G'$ (see Figure~\ref{fig1}). The Hamiltonian path in $G$, which visits the nodes of $G$ in order $v_1,v_3,v_2,v_4$, corresponds to the $s$-$t$ path composed of the one dashed arc,  the dotted arcs, 
 and the one solid arc of the corresponding graph $G'$. 
 Observe that $X$ uses at least one arc from each vertical path $P_1,\dots, P_n$. Therefore, each vertical path $P_i$ is in $\Phi(X,k)$ and can be used as the second-stage (recovery) path. In the worst second-stage cost scenario for $X$, the budget $\Gamma^c$ is shared equally among the arcs of all vertical paths (all vertical paths must have the same second-stage cost). Because the first-stage cost of $X$ is~0 and the second-stage cost of the best second-stage path is $\frac{\Gamma^c}{n}$, the total cost of the solution to \textsc{Rec Rob SP} is $\frac{\Gamma^c}{n}$.
 
 $\Leftarrow$ Assume that there is a solution to \textsc{Rec Rob SP} with a total cost of at most $\frac{\Gamma^c}{n}$. Let $X$ be the first-stage path in this solution. We claim that $X$ must use at least one arc from each vertical path $P_1,\dots, P_n$. Indeed, suppose that one of these paths, say $P_l$, does not intersect with $X$. Since $k=2n$, the path $P_l\notin \Phi(X,k)$. Since the second-stage costs of the diagonal arcs are equal to $M>\Gamma^c$, no diagonal arc can be used in the second-stage path.  Therefore, in a worst scenario for $X$, the budget $\Gamma^c$ is not allocated to the arcs in path $P_l$. Consider scenario $\pmb{c}$ in which $\Gamma^c$ is equally shared among all the arcs of the remaining $n-1$ vertical paths. In what follows, every second-stage path in $\Phi(X,k)$ has a cost of at least $\frac{\Gamma^c}{n-1}$, which contradicts the assumption that there is a solution to \textsc{Rec Rob SP} with the total cost of at most $\frac{\Gamma^c}{n}$.  
 We can now retrieve a Hamiltonian path in $G$ from the diagonal arcs in $X$. Notice first that $X$ has exactly $2n+1$ arcs ($G'$ is a layered digraph) and contains exactly $n-1$ diagonal arcs. Indeed, the diagonal arcs must connect $n$ disjoint vertical paths, so the number of diagonal arcs is at least $n-1$. On the other hand, the number of diagonal arcs cannot be greater than $n-1$, since the number of arcs in $X$ is equal to $2n+1$ (we need $n$ arcs from the vertical paths and two arcs connecting $s$ and $t$ in $X$). Now, it is easy to see that the diagonal arcs in $X$ correspond to the arcs on a Hamiltonian path in $G$.
\end{proof}

We now show that the \textsc{Rec Rob SP} problem in  general acyclic digraphs with  any neighborhood (\ref{incl})-(\ref{sym}),
under $\mathcal{U}(\Gamma^c)$
 is not approximable within $2^{\log^{1-\epsilon} n}$ for any $\epsilon>0$, unless NP $\subseteq$ DTIME$(n^{\mathrm{poly} \log n})$
 and it is not approximable within any constant $\alpha>0$ unless P=NP. 
 In our reduction, we will use the following Boolean satisfiability problems with identical inputs, 
 which are known to be strongly NP-hard
 (see, e.g.,~\cite{GJ79}):
\begin{description}
\item{Input:}  A 3-CNF formula  of $n$ Boolean variables $\{x_i\}_{i\in[n]}$ given by $m$ clauses 
                     $\{\mathcal{C}_j\}_{j\in [m]}$.
                     
\item{\textsc{3-Sat}}                     
\item{Question:}  Is there a 0-1~assignment   to the variables that satisfies all the clauses?

\item{\textsc{Max 3-Sat}}                     
\item{Output:} A 0-1 assignment to the variables that maximizes the number of satisfied clauses.   
\end{description}
In the following, we will denote by $OPT(\Psi)$ the maximum value of the objective function in a \textsc{Max 3-Sat} instance~$\Psi$.

\begin{lem}[\cite{ALMSS92}]
There is  a fixed constant $\gamma>0$ and a polynomial time reduction~$\tau$
from  \textsc{3-Sat}  to \textsc{Max 3-Sat} such that for every  \textsc{3-Sat}  instance~$\Psi$:
 \begin{enumerate}
 \item if $\Psi$ is satisfiable then  $OPT(\tau(\Psi))=m$,
 \item if $\Psi$ is not satisfiable then $OPT(\tau(\Psi))< \frac{m}{1+\gamma}$,
  \end{enumerate}
  where $m$ is the number of clauses in \textsc{Max 3-Sat} instance~$\tau(\Psi)$.
  \label{lmaxsat}
\end{lem}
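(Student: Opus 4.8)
The statement of Lemma~\ref{lmaxsat} is a gap-preserving reformulation of the PCP theorem, so the plan is to \emph{derive} it from the characterization $\mathrm{NP}=\mathrm{PCP}(O(\log n),O(1))$ established in~\cite{ALMSS92} rather than to reprove that characterization from scratch. First I would fix, for the NP-complete language \textsc{3-Sat}, a probabilistically checkable proof system whose verifier reads the input~$\Psi$, tosses $r=O(\log n)$ random coins, makes $q=O(1)$ (non-adaptive) queries to a proof oracle, accepts every satisfiable~$\Psi$ with probability~$1$ (perfect completeness), and accepts every unsatisfiable~$\Psi$ with probability at most~$\tfrac12$ (soundness). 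The existence of such a verifier is exactly the content of the PCP theorem, and the perfect completeness is what will let me get $OPT(\tau(\Psi))=m$ exactly.

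The reduction~$\tau$ is then obtained by writing the verifier out as a Boolean formula. For each of the $2^r=\mathrm{poly}(n)$ random strings~$\rho$, the accept/reject decision is a fixed predicate $f_\rho$ of the $q$ proof bits queried on~$\rho$; since $q$ is constant, $f_\rho$ depends on only $O(1)$ variables and can be written as a CNF with at most $2^q$ clauses, each of which is split into genuine $3$-clauses using a constant number of auxiliary variables, in the standard equisatisfiable way. Taking the proof positions together with all the auxiliaries as the Boolean variables, and forming the conjunction of these blocks over all~$\rho$, yields a $3$-CNF formula $\tau(\Psi)$ with $m=2^r\cdot O(2^q)=\mathrm{poly}(n)$ clauses, computable in polynomial time. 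In the completeness case, a correct proof for a satisfiable~$\Psi$ makes the verifier accept on every~$\rho$, so all blocks are simultaneously satisfiable and $OPT(\tau(\Psi))=m$. In the soundness case, when $\Psi$ is unsatisfiable every proof is rejected on at least half of the random strings; the encoding guarantees that each rejecting~$\rho$ forces at least one clause in its own (disjoint) block to fail for \emph{every} setting of the auxiliaries, so at least $2^{r}/2$ distinct clauses are violated. As each block has bounded size $B=O(2^q)$, a constant fraction $\delta=\tfrac{1}{2B}>0$ of the clauses is violated under any assignment, giving $OPT(\tau(\Psi))\le(1-\delta)m$. Choosing $\gamma>0$ with $(1-\delta)(1+\gamma)<1$ yields $OPT(\tau(\Psi))<\tfrac{m}{1+\gamma}$.

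The hard part is entirely concentrated in the first step: the PCP theorem itself, whose proof (via the algebraic verifier of~\cite{ALMSS92}, or alternatively via a combinatorial gap-amplification argument) is deep and lengthy, and which I would invoke as a black box. Everything afterwards is routine, namely the arithmetization of the constant-query acceptance predicate and the textbook padding of clauses to width exactly~$3$ with auxiliary variables; both operations preserve the constant multiplicative gap up to a change in the value of~$\gamma$, which is harmless since $\gamma$ is only required to be a fixed positive constant.
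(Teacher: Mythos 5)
The paper gives no proof of this lemma at all: it is imported verbatim from \cite{ALMSS92}, exactly as the bracketed citation in the statement indicates. Your derivation from the PCP theorem of \cite{ALMSS92} --- encoding each constant-query acceptance predicate $f_\rho$ as a constant-size block of 3-clauses over shared proof-bit variables with per-block auxiliaries, using perfect completeness to get $OPT(\tau(\Psi))=m$ exactly, and converting soundness $\tfrac12$ into a fixed fraction $\delta=\tfrac{1}{2B}$ of violated clauses, hence a constant gap $\gamma$ --- is the standard argument, is correct, and thus legitimately fills in the proof the paper delegates to the citation.
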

Lemma~\ref{lmaxsat} says equivalently that achieving an approximation ratio of $1+\gamma$ for \textsc{Max 3-Sat} is NP-hard.
It has been shown in~\cite{HA01} that such a ratio is of~$\frac{8}{7}$ ($\gamma=\frac{1}{7}$).

\begin{lem} 
Let $\Psi=(\{x_i\}_{i\in[n]}, \{\mathcal{C}_j\}_{j\in [m]})$ be any  instance of \textsc{Max 3-Sat} and
$q$ be a positive integer such that $q<2n$ and $\delta$  be any number from~$(0,1)$.
Then there is a reduction from~$\Psi$
to 
 \textsc{Rec Rob SP} instance 
 $\mathcal{I}^{(q)}=(G^{(q)}=(V^{(q)},A^{(q)}), \{C^{(q)}_e\}_{e\in A^{(q)}}, \mathcal{U}^{(q)}(\Gamma^c), \Phi(X,k))$,
 where $G^{(q)}$ is an acyclic digraph with $|V^{(q)}|\leq n^{O(q)}$, $|A^{(q)}|\leq n^{O(q)}$, $\Phi(X,k)$  is any neighborhood  (\ref{incl})-(\ref{sym}),
 satisfying the following properties:
 \begin{enumerate}
 \item if  $OPT(\Psi)=m$ then  $OPT(\mathcal{I}^{(q)})=\frac{1}{m^q}$,
 \label{satrrsp}
 \item if  $OPT(\Psi)<\delta m$ then  $OPT(\mathcal{I}^{(q)})>\frac{1}{\delta^q m^q}$.
  \label{rrspsat}
 \end{enumerate}
 \label{lsatrrsp}
\end{lem}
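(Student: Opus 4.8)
The plan is to design a gap-preserving reduction whose optimal value equals $\Gamma^c/OPT(\Psi)^q$ (taking $\Gamma^c=1$), so that the multiplicative gap $1:\delta$ guaranteed for \textsc{Max 3-Sat} by Lemma~\ref{lmaxsat} is amplified to the gap $1:\delta^q$ required by properties~\ref{satrrsp}--\ref{rrspsat}. The engine of the construction is the budget-equalization phenomenon already visible in the proof of Theorem~\ref{thm1}: if a first-stage path $X$ admits exactly $r$ pairwise arc-disjoint recovery paths inside $\Phi(X,k)$, each built from cheap arcs with nominal cost $0$, then the adversary's optimal response is to spread the budget $\Gamma^c$ evenly over these $r$ paths, making the worst-case second-stage cost exactly $\Gamma^c/r$. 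I would first isolate this as a small lemma (an averaging upper bound matched by the equal-split lower bound, both using disjointness and the fact that a track of caps $M$ can absorb $\Gamma^c/r$ deviation), since it reduces the whole problem to controlling the \emph{number} of disjoint recovery paths reachable from $X$.

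Next I would build an acyclic digraph $G^{(q)}$, generalizing the layered gadget of Theorem~\ref{thm1}, containing $m^q$ pairwise arc-disjoint ``recovery tracks'' indexed by the $q$-tuples $(j_1,\dots,j_q)\in[m]^q$ of clauses; all track arcs have first-stage cost $0$ and second-stage interval $[0,M]$, while the ``connector'' arcs used by the first-stage path to weave between tracks carry the fixed expensive cost with interval $[M,M]$, $M>\Gamma^c$, so that no connector is ever used for recovery. The recovery parameter $k$ is again set leniently, so that sharing a single arc with a track suffices to place that track in $\Phi(X,k)$ (with the threshold doubled for the symmetric-difference neighborhood, exactly as in Theorem~\ref{thm1}). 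The design goal for the connectors is the \textbf{AND-gate} property: a first-stage path encoding a variable assignment $a$ can touch the track $(j_1,\dots,j_q)$ if and only if every clause $\mathcal{C}_{j_1},\dots,\mathcal{C}_{j_q}$ is satisfied by $a$. Writing $S(a)\subseteq[m]$ for the set of clauses satisfied by $a$, the touchable tracks are then exactly $S(a)^q$, so their number is $|S(a)|^q$.

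Granting the AND-gate property, the analysis is immediate. For a fixed first-stage path encoding $a$, the disjoint-track lemma gives worst-case cost $\Gamma^c/|S(a)|^q$, and since every arc of $X$ costs $0$ in the first stage, this is the entire objective value of $X$. Minimizing over $X$ amounts to choosing the assignment maximizing $|S(a)|$, i.e.\ $|S(a)|=OPT(\Psi)$, so $OPT(\mathcal{I}^{(q)})=\Gamma^c/OPT(\Psi)^q=1/OPT(\Psi)^q$. Substituting the two cases then yields property~\ref{satrrsp} (if $OPT(\Psi)=m$ the value is exactly $1/m^q$, as the $m^q$ tracks are all reachable and no more exist) and property~\ref{rrspsat} (if $OPT(\Psi)<\delta m$ then every $X$ reaches fewer than $(\delta m)^q$ tracks, so the value exceeds $1/(\delta m)^q=1/(\delta^q m^q)$). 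The size bound follows by counting: each of the $m^q$ tracks has length polynomial in $n$ and is reached through $O(q)$ connectors, so $|V^{(q)}|,|A^{(q)}|\le m^q\cdot n^{O(1)}=n^{O(q)}$, using $m=n^{O(1)}$.

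The main obstacle is realizing the AND-gate property in an acyclic digraph of the claimed size, with \emph{no false positives and no false negatives} and with the $m^q$ tracks genuinely arc-disjoint. The tension is that disjointness forces the first-stage path to touch each reachable track with a distinct arc, so $X$ must weave through up to $m^q$ tracks, yet the set it can reach must depend only on a \emph{single, globally consistent} assignment rather than on $q$ independently chosen ones; a naive ``commit the assignment, then test each clause'' layout would have to remember the assignment and blow up exponentially. I would resolve this the way Theorem~\ref{thm1} resolves the analogous issue for Hamiltonicity: by encoding the assignment \emph{structurally} in the route itself and nesting the construction, replacing each clause-track of the base ($q=1$) gadget by a fresh copy of the $(q-1)$-level gadget that reuses the same assignment-encoding coordinate, so that reaching a depth-$q$ track structurally conjoins the $q$ clause tests while keeping the copies arc-disjoint. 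Verifying that a single $s$-$t$ path can simultaneously touch all of $S(a)^q$ tracks (and no others), and that this forces exactly the equalized value $\Gamma^c/|S(a)|^q$ for each of the three neighborhoods, is the crux of the argument and the only place where real care is needed.
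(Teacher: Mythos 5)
Your high-level architecture coincides with the paper's: $m^q$ cheap recovery tracks indexed by clause $q$-tuples, a first-stage path of first-stage cost $0$ encoding an assignment, budget equalization giving objective value $\Gamma^c/\ell$ when exactly $\ell$ tracks are usable for recovery, and amplification of the \textsc{Max 3-Sat} gap to $1/\delta^q$; the final accounting and the $n^{O(q)}$ size bound are also as in the paper. (One small caveat: your equalization lemma needs the tracks to be internally vertex-disjoint, or the budget to be spread so that every cheap mixed path is loaded, since an adversary who concentrates $\Gamma^c/r$ on one arc per merely arc-disjoint track can be evaded by a path switching tracks at a shared vertex.) The genuine gap is exactly the one you flag yourself, and the mechanism you commit to cannot close it. You set $k$ ``leniently, so that sharing a single arc with a track suffices to place that track in $\Phi(X,k)$'', which pushes the entire conjunction of the $q$ clause tests into reachability structure. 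But to keep the encoded assignment globally consistent, each variable must be decided exactly once along $X$ (a path has no other memory of size $n^{O(q)}$), and then whether a given arc lies on $X$ is determined by the truth value of a \emph{single} literal. A track carrying witness arcs for several literals is therefore touched as soon as \emph{any one} of them is true: with single-arc sharing each gadget computes the OR of its $q$ clause tests, not their AND. Your nesting repair does not escape this dichotomy: if the level-$(q-1)$ copies have fresh assignment-encoding structure, a dishonest first-stage path can use different assignments in different copies, so the track of a tuple $(\mathcal{C}_{j_1},\dots,\mathcal{C}_{j_q})$ becomes touchable whenever these $q$ clauses are \emph{jointly satisfiable by some} assignment---which holds for the vast majority of tuples whether or not $\Psi$ is satisfiable---and the gap collapses; if instead the copies literally share the variable-choice arcs, the construction flattens back to the single-read situation and again computes an OR.

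The paper's resolution is precisely the mechanism your lenient choice of $k$ forecloses: the conjunction is implemented by the recovery \emph{threshold}, not by reachability. In the gadget of tuple $(\mathcal{C}^{1},\dots,\mathcal{C}^{q})$, each literal path (track) has exactly $2n+3$ arcs, of which exactly $q$ are literal arcs, one per component of a non-contradictory literal tuple; these literal arcs are the only arcs of the track that an optimal (all-solid) first-stage path can contain, because all other track arcs have first-stage cost $2>\Gamma^c$. Setting $k=2n+3-q$ for the arc-inclusion neighborhood (\ref{incl}) makes a track $Y$ satisfy $|Y\setminus X|\leq k$ iff $|Y\cap X|\geq q$, i.e., iff \emph{all} $q$ of its literal arcs lie on $X$, i.e., iff the single globally-read assignment makes all $q$ chosen literals true. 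This single-read-plus-counting device is what lets the AND of $q$ tests coexist with a consistent assignment encoding and with the disjointness needed for equalization; it is the missing step in your proposal. Once it is in place, the rest of your outline goes through essentially as in the paper, including the padding of all first-stage paths to equal length so that the same argument works for the neighborhoods (\ref{excl}) and (\ref{sym}).
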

\begin{proof}
Let $\Psi=(\{x_i\}_{i\in[n]}, \{\mathcal{C}_j\}_{j\in [m]})$ be a given an instance of \textsc{Max 3-Sat}.
We can assume w.l.o.g. that $x_i$ appears in some clause and its negation~$\overline{x}_i$ appears in some clause.
 We  construct an instance 
  $\mathcal{I}^{(q)}=(G^{(q)}=(V^{(q)},A^{(q)}), \{C^{(q)}_e\}_{e\in A^{(q)}}, \mathcal{U}^{(q)}(\Gamma^c), \Phi(X,k))$
  of \textsc{Rec Rob SP} as follows. The underlying acyclic digraph~$G^{(q)}$ consists of $m^q$ \emph{clause gadgets} that correspond
  to the set of all $q$-tuples of the clause set $\mathfrak{C}=\{\mathcal{C}_1,\ldots,\mathcal{C}_m\}$.
  For every clause tuple 
  $(\mathcal{C}^{1},\ldots,\mathcal{C}^{q})\in \mathfrak{C}^q$, where clause $\mathcal{C}^j\in \mathfrak{C}$, $j\in [q]$, has the form
  $\mathcal{C}_{j}=(l_1^j\vee l_2^j \vee l_3^j)$, we build a clause 
  gadget~$G^{(q)}_{(\mathcal{C}^{1},\ldots,\mathcal{C}^{q})}$, which is a digraph 
 that consists of $2n$ horizontal layers corresponding to variables~$x_i$ and their negations~$\overline{x}_i$, $i\in[n]$,
 and has two distinguished nodes $s_{(\mathcal{C}^{1},\ldots,\mathcal{C}^{q})}$ and $t_{(\mathcal{C}^{1},\ldots,\mathcal{C}^{q})}$.
 For each $q$-tuple of literals from $\mathcal{L}(\mathcal{C}^{1},\ldots,\mathcal{C}^{q})=\{l_1^1, l_2^1,l_3^1\}\times \cdots\times \{l_1^q, l_2^q,l_3^q\}$, which do not contain contradictory literals, we create a \emph{literal path} from $s_{(\mathcal{C}^{1},\ldots,\mathcal{C}^{q})}$ to $t_{(\mathcal{C}^{1},\ldots,\mathcal{C}^{q})}$.  This path contains exactly $q$ literal arcs (solid fat arcs) that correspond to the $q$ components of the literal tuple. The rest of the arcs are dummy arcs (dotted arcs). The path created contains exactly $2n+1$ arcs and contains exactly $q$ literal (solid fat) arcs and exactly $2n+1-q$ dummy (dotted arcs).  Example are shown in Figure~\ref{gadget1}a and~\ref{gadget2}. In Figure~\ref{gadget2} the leftmost path from $s_{(\mathcal{C}_1,\mathcal{C}_2)}$ to $t_{(\mathcal{C}_1,\mathcal{C}_2)}$ corresponds to the tuple of literals $(x_1,x_1)$. Observe that two literal arcs are created in the layer corresponding to $x_1,\overline{x}_1$, and a path of length $2n+1$ is then created by adding a number of dummy (dotted) arcs. 
 There are at most $3^q$ vertical \emph{literal paths}
  from~$s_{(\mathcal{C}^{1},\ldots,\mathcal{C}^{q})}$ to $t_{(\mathcal{C}^{1},\ldots,\mathcal{C}^{q})}$ in the gadget.
 They correspond to at most $3^q$ $q$-tuples of the literals.
  from the literal set 
 $\mathcal{L}(\mathcal{C}^{1},\ldots,\mathcal{C}^{q})$.
 %=\{l_1^1, l_2^1,l_3^1\}\times \cdots\times \{l_1^q, l_2^q,l_3^q\}$,
% $\mathcal{C}^{j}=(l_1^j\vee l_2^j \vee l_3^j)$, $j\in [q]$;
% $q$-tuples with contradictory literals are omitted. Each literal path contains $q$  \emph{literal arcs} (the solid fat arcs)
% that correspond to the $q$ components of the literal tuple, the rest of the arcs are  \emph{dummy arcs} (the dotted ones).
% Thus, 
% each literal path contains exactly~$2n+1$ arcs.
% If a literal path is associated with an $q$-tuple that contains repeated literals,
% then the literal arc (solid fat), corresponding to that literal,
% on this path is split into multiple literal arcs - one for each occurrence
%  (see  examples of clause gadgets~$G^{(1)}_{(\mathcal{C}_1)}$ and ~$G^{(2)}_{(\mathcal{C}_1,\mathcal{C}_2)}$
%   depicted at Figures~\ref{gadget1}a and~\ref{gadget2}, respectively).
   If literal~$x_i$ (resp.  literal~$\overline{x}_i$), $i \in [n]$,  is a component of some   literal tuples in
   $\mathcal{L}(\mathcal{C}^{1},\ldots,\mathcal{C}^{q})$,
   then 
 we connect the \emph{in-literal node}~$x_i$ (resp. the in-literal node~$\overline{x}_i$),
 the corresponding literal arcs and the  \emph{out-literal node}~$x_i$ (resp. the out-literal node~$\overline{x}_i$)
 using solid normal arcs to form
 a path from the in-literal node~$x_i$ (resp. the in-literal node~$\overline{x}_i$)
 to the  out-literal node~$x_i$ (resp. the out-literal node~$\overline{x}_i$).
  Otherwise, we add a normal direct arc connecting 
  the in-literal node~$x_i$ and the out-literal node~$x_i$ or
 the in-literal node~$\overline{x}_i$ and the out-literal node~$\overline{x}_i$.
 \begin{figure}
 \centering
 \includegraphics[keepaspectratio,width=0.9\textwidth]{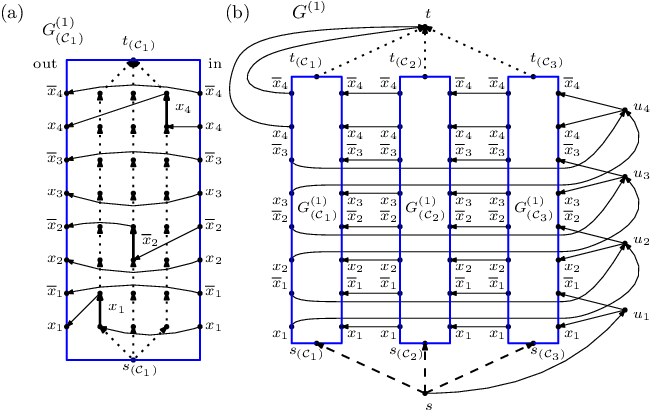}
 \caption{(a) A sample of clause gadget~$G^{(1)}_{(\mathcal{C}_1)}$  corresponding to the clause tuple
 $(\mathcal{C}_1)$, where
  $\mathcal{C}_1=(x_1\vee \overline{x}_2 \vee x_4)$. 
  (b) A digraph $G^{(1)}$, that consists of three clause gadgets, corresponding a
   formula~$\Psi=\mathcal{C}_1 \wedge \mathcal{C}_2 \wedge \mathcal{C}_3$. }
 \label{gadget1}
 \end{figure}
 \begin{figure}
 \centering
 \includegraphics[keepaspectratio,width=0.6\textwidth]{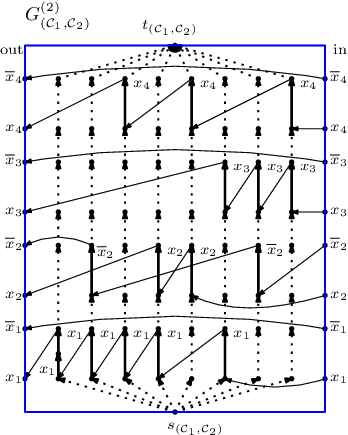}
 \caption{
 A sample clause gadget~$G^{(2)}_{(\mathcal{C}_1,\mathcal{C}_2)}$, corresponding to the clause tuple $(\mathcal{C}_1, \mathcal{C}_2)$, where $\mathcal{C}_1 = (x_1 \vee \overline{x}_2 \vee x_4)$ and $\mathcal{C}_2 = (x_1 \vee x_2 \vee x_3)$, includes 8 literal paths. The two leftmost literal paths correspond to the 2-tuples $(x_1, x_1)$ and $(\overline{x}_2, x_1)$, respectively. The literal path corresponding to $(\overline{x}_2, x_2)$ is omitted.
  }
 \label{gadget2}
 \end{figure}
 
  We now construct $G^{(q)}=(V^{(q)},A^{(q)})$ as follows.
  We add to  $G^{(q)}$  all the $m^q$ already built clause gadgets $G^{(q)}_{(\mathcal{C}^{1},\ldots,\mathcal{C}^{q})}$,
 where $(\mathcal{C}^{1},\ldots,\mathcal{C}^{q})\in \mathfrak{C}^q$,
 $n$~\emph{variable nodes} $u_1,\dots,u_n$ corresponding to variables $x_1,\dots,x_n$, and 
 two distinguished nodes $s$ and~$t$.
For every  $(\mathcal{C}^{1},\ldots,\mathcal{C}^{q})\in \mathfrak{C}^q$ we add two 
 arcs: a dashed arc $(s,s_{(\mathcal{C}^{1},\ldots,\mathcal{C}^{q})})$,  a dotted arc $(t_{(\mathcal{C}^{1},\ldots,\mathcal{C}^{q})},t)$. Next, we add a normal arc $(s,u_1)$.
 Then, form two disjoint paths from each variable node~$u_i$ to the variable node~$u_{i+1}$, for $i \in [n-1]$, which
  traverse through the $m^q$ clause gadgets $G^{(q)}_{(\mathcal{C}^{1},\ldots,\mathcal{C}^{q})}$, for 
   $(\mathcal{C}^{1},\ldots,\mathcal{C}^{q})\in \mathfrak{C}^q$. This is achieved by adding normal arcs that go to
   the in-literal nodes~$x_i$ and~$\overline{x}_i$ and emanating from the out-literal nodes~$x_i$ and~$\overline{x}_i$
   of the gadgets.
   Similarly, for the variable node~$u_n$, we form two disjoint paths going from~$u_n$ to the node~$t$.
    These paths are disjoint due to the construction of the clause gadgets.
   A sample construction of $G^{(1)}=(A^{(1)},V^{(1)})$ is shown in Figure~\ref{gadget1}b.

 The costs (the first and the second stage) of arc~$e\in A^{(q)}$ are as follows:
  $C^{(q)}_e=0$ if $e$ is solid (normal or fat), $C^{(q)}_e=2$ if $e$ is dotted or dashed;
  $\hat{c}^{(q)}_e=0$ and $\Delta^{(q)}_e=0$ if $e$ is fat solid  or dotted,
   $\hat{c}^{(q)}_e=2$ and $\Delta^{(q)}_e=0$ if $e$ is solid normal,
 $\hat{c}^{(q)}_e=0$ and $\Delta^{(q)}_e=1$ if $e$ is dashed. Notice that the uncertainty is only for the costs of the dashed arcs $(s,s_{(\mathcal{C}^{1},\ldots,\mathcal{C}^{q})})$, 
 $(\mathcal{C}^{1},\ldots,\mathcal{C}^{q})\in \mathfrak{C}^q$). Finally, we fix $\Gamma^c=1$.
We first consider the arc-inclusion neighborhood (see~(\ref{incl})). We set $k=2n+3-q$. Notice that $k>0$, by the assumption that $q<2n$. This completes the reduction.
 We now show that the reduction produces a gap of~$1/\delta^q$.
The following two claims result from the construction of the instance $\mathcal{I}^{(q)}$.
  \begin{cla}
 \label{cl11cor}
 There is a one-to-one correspondence between the set of assignments of the 
 formula~$\Psi$ and the set of $s$-$t$ paths that use only solid arcs, both normal and fat.
 \end{cla}
 Indeed, consider any 0-1 assignment  $\{x_i\}_{i\in[n]}$  to the variables.
 We form a path~$X$ as follows:
 we add arc~$(s,u_1)$ to~$X$ and  for $i\in[n-1]$
 if $x_i=1$ then we add the subpath from the variable node~$u_i$ to  the variable node~$u_{i+1}$
 that traverses  the in-literal nodes~$x_i$ and the out-literal nodes~$x_i$ of the clause gadgets;
  otherwise ($x_i=0$) 
 we add the subpath from the variable node~$u_i$ to  the variable node~$u_{i+1}$
 that traverses  the in-literal nodes~$\overline{x}_i$ and the out-literal nodes~$\overline{x}_i$.
Similarly,  for the variable node~$v_n$, we add one of the two disjoint subpaths from 
the variable node~$v_i$ to~$t$ depending on the value of~$x_i$.
Notice that $X$ is an $s$-$t$ path that uses
only solid arcs (both normal and fat).
Moreover, for two different  0-1 assignments, the $ s$-$t$ paths formed in this way are distinct.
It is easily seen that any path~$X$ that uses
only solid arcs (both normal and fat) determines a 0-1 assignment  $\{x_i\}_{i\in[n]}$.
Namely, if $X$ traverses the in-literal nodes~$x_i$ and the out-literal nodes~$x_i$ of the clause gadgets,
then we set $x_i=1$; otherwise, we set $x_i=0$, which proves the claim.
\begin{cla}
 \label{cloptx}
 Let $X$ be any $s$-$t$ path that uses only solid arcs, both normal and fat, and let $\ell$
  be the number of clause gadgets that contain at least one path from $s$ to $t$ that intersects exactly $q$ arcs with~$X$ ($q$ 
  literal arcs).  
Then, its total cost (the first and the stage costs) is  $\frac{1}{\ell}$, $\frac{1}{m^q}\leq \frac{1}{\ell}\leq 1$.
  \end{cla}
 The first-stage cost of $X$ is equal to~0.
%  since the first-stage cost of the dashed and dotted arcs is~2, none of these arcs can be used in the first stage.
  Any path from~$s$ to~$t$ that goes through a literal path  of a clause gadget intersecting exactly $q$ arcs with~$X$ 
  is in the neighborhood~$\Phi^{(\mathrm{incl})}(X,2n+3-q)$, so we can replace $X$ with this path in the second stage.
  Therefore, in a worst second-stage scenario, the budget $\Gamma^c$ is shared uniformly among
   the dashed arcs, 
   arcs $(s,s_{(\mathcal{C}^{1},\ldots,\mathcal{C}^{q})})$, $(\mathcal{C}^{1},\ldots,\mathcal{C}^{q})\in \mathfrak{C}^q$, that contain
  these intersecting $s$-$t$ paths. Hence and from the facts that $\Gamma^c=1$ and $|\mathfrak{C}^q|=m^q$,
  the total cost of $X$ is equal to $\frac{1}{\ell}$, where $\frac{1}{m^q}\leq \frac{1}{\ell}\leq 1$.

 It remains to prove the implications~\ref{satrrsp} and~\ref{rrspsat}.
  Suppose  that there exists  a 0-1 assignment~$\{x_i\}_{i\in[n]}$  to the variables that satisfies all the $m$-clauses of the 
 formula~$\Psi$, $OPT(\Psi)=m$.  By Claim~\ref{cl11cor},
 the assignment~$\{x_i\}_{i\in[n]}$ uniquely determines the first-stage $s$-$t$ path~$X$, which
 uses only solid arcs (both normal and fat) and 
 if $x_i=1$, $i\in [n]$,  then $X$ uses the arc~$(u_i, x_i)$; otherwise ($x_i=0$) it uses the arc~$(u_i, \overline{x}_i)$.
  Since the assignment satisfies $m$ clauses, every clause gadgets $G^{(q)}_{(\mathcal{C}^{1},\ldots,\mathcal{C}^{q})}$,
 $(\mathcal{C}^{1},\ldots,\mathcal{C}^{q})\in \mathfrak{C}^q$,
 contains at least one path from~$s$ to~$t$, that intersects exactly~$q$ literal arcs with~$X$, corresponding to a 
 literal tuple in
 $\mathcal{L}(\mathcal{C}^{1},\ldots,\mathcal{C}^{q})$ satisfying the clauses $\mathcal{C}^{1},\ldots,\mathcal{C}^{q}$.
 Thus, the total cost of~$X$ is $\frac{1}{m^q}$,
 which follows from the fact that~$|\mathfrak{C}^q|=m^q$.
 Again, by Claim~\ref{cloptx}, this is a lower bound on the cost of any optimal path.
 Hence, $X$ is an optimal path
 and $OPT(\mathcal{I}^{(q)})=\frac{1}{m^q}$. This completes the proof of implication~\ref{satrrsp}.
 
 Assume that  $OPT(\mathcal{I}^{(q)})\leq \frac{1}{\delta^q m^q}$. 
 Let $X^*$ be an optimal $s$-$t$ path with the cost of  $OPT(\mathcal{I}^{(q)})$.
 Clearly, it uses only solid arcs, both normal and fat, because other arcs have the first-stage cost equal to~2.
  Let $\ell$
  denote the number of clause gadgets that contain at least one path from 
$s$ to 
$t$ intersecting exactly $q$ literal arcs from~$X^*$. 
These clause gadgets correspond to 
$\ell$ clause tuples in~$\mathfrak{C}^q$.
  By Claim~\ref{cl11cor},  $X^*$ uniquely determines  the 0-1 assignment  $\{x_i^{*}\}_{i\in[n]}$  to the variables of~$\Psi$.
  Therefore, in each of the $\ell$ clause gadgets, say, $G^{(q)}_{(\mathcal{C}^{1},\ldots,\mathcal{C}^{q})}$,	
any $s$-$t$  path intersecting exactly $q$  literal arcs from~$X^*$  
 corresponds to a literal tuple in  $\mathcal{L}(\mathcal{C}^{1},\ldots,\mathcal{C}^{q})$
that satisfies exactly the 
$q$ clauses $\mathcal{C}^{1},\ldots,\mathcal{C}^{q}$.
   From Claim~\ref{cloptx} and our assumption, 
    it follows that the total cost of~$X^*$ is such that $\frac{1}{\ell}\leq \frac{1}{\delta^q m^q}$ and
   so $\ell\geq \delta^q m^q$.
   Hence, there are at least $\delta^q m^q$ clause tuples in $\mathfrak{C}^q$ such that each clause in the tuple is satisfied under
   the 0-1 assignment  $\{x_i^{*}\}_{i\in[n]}$. In consequence,
   $\{x_i^{*}\}_{i\in[n]}$ satisfies at least $\delta m$ clauses of the formula~$\Psi$.
 From the optimality of~$X^*$ and Claim~\ref{cl11cor}, we conclude that 
 $OPT(\Psi)\geq \delta m$.
  This proves the implication~\ref{rrspsat}.
  Accordingly, the optimum value in the instance  $\mathcal{I}^{(q)}$ of  \textsc{Rec Rob SP} is
  either~$1/m^q$ or greater than~$1/(\delta m)^q$ and so the resulting gap is $1/\delta^q$.

      To modify the above reduction to obtain the same result for the arc-exclusion and arc-symmetric difference neighborhoods,
  (\ref{excl}) and(\ref{sym}),
 we need to ensure that all optimal first-stage paths have the same number of arcs (normal or fat).
 Choose the arcs of the path from the variable node~$u_i$ to  the variable node~$u_{i+1}$ (resp. from $u_n$ to $t$)
  defined for the assignment $x_i=1$ or $x_i=0$ (resp. $x_n=1$ or $x_n=0$),
 see Figure~ \ref{gadget1}b.
 A raw upper bound on the number of arcs on this path is $r= m^q (q 3^q+1)+1$.
 %Suppose the number of arcs on this path is less than $m^q(2\cdot 3^q+1)+m^q+1$ and denote it by~$l$.
  We split the first solid arc, $(u_i,x_i)$ or $(u_i,\overline{x}_i)$, on this path into a number of new solid arcs so that the number of arcs on the path equals $r$. 
  Now, each first-stage path $X$ has exactly $n r+1$ arcs. For the arc-exclusion neighborhood, we set $k=n r+1-q$ and for the arc-symmetric difference, we fix $k=(2n+3)+(nr+1)-2q$. 
  The proofs of implications~\ref{satrrsp} and~\ref{rrspsat} for the arc-exclusion and arc-symmetric difference neighborhoods
are the same as for the arc-inclusion neighborhood. 
  
 It is easily seen that the instance
 $\mathcal{I}^{(q)}$
 has size $N=n^{O(q)}$, where $n$ is the number of variables in the corresponding  instance of  \textsc{Max 3-Sat}.
\end{proof}
\begin{thm}
The \textsc{Rec Rob SP} problem in  general  acyclic digraphs with  any neighborhood (\ref{incl})-(\ref{sym}),
under $\mathcal{U}(\Gamma^c)$,  
 is not approximable within $2^{\log^{1-\epsilon} N}$ for any $\epsilon>0$, unless NP $\subseteq$ DTIME$(n^{\mathrm{poly} \log n})$,
 if $k$ is part of the input and $N$ is the input size.
 \label{tsatrrsp}
\end{thm}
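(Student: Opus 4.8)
The plan is to compose the two gap-producing reductions already at hand — Lemma~\ref{lmaxsat}, which turns \textsc{3-Sat} into a \textsc{Max 3-Sat} instance with a fixed multiplicative gap $1+\gamma$, and Lemma~\ref{lsatrrsp}, which amplifies a gap of $1/\delta$ in \textsc{Max 3-Sat} into a gap of $1/\delta^q$ in \textsc{Rec Rob SP} — and then to tune the free parameter $q$ to a polylogarithmic function of the instance size so that the resulting hardness gap exceeds $2^{\log^{1-\epsilon}N}$ while the overall reduction remains quasi-polynomial. Throughout I would fix $\delta=\tfrac{1}{1+\gamma}$.

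Starting from a \textsc{3-Sat} instance $\Psi'$, apply $\tau$ to obtain a \textsc{Max 3-Sat} instance $\Psi$ with $n$ variables and $m$ clauses; by Lemma~\ref{lmaxsat}, $OPT(\Psi)=m$ if $\Psi'$ is satisfiable and $OPT(\Psi)<\delta m$ otherwise. Feeding $\Psi$ and the parameter $q$ into Lemma~\ref{lsatrrsp} produces $\mathcal{I}^{(q)}$ of size $N=n^{O(q)}$ with $OPT(\mathcal{I}^{(q)})=\tfrac{1}{m^q}$ in the satisfiable case and $OPT(\mathcal{I}^{(q)})>\tfrac{1}{\delta^q m^q}=\tfrac{(1+\gamma)^q}{m^q}$ otherwise. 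Consequently, a $\rho$-approximation algorithm for \textsc{Rec Rob SP} with $\rho<(1+\gamma)^q$ would decide satisfiability of $\Psi'$: in the satisfiable case it returns a value at most $\rho/m^q<(1+\gamma)^q/m^q$, which is strictly below the optimum in the unsatisfiable case, so a single threshold comparison separates the two.

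The crux — and the main obstacle — is the simultaneous balancing of $q$: it must be large enough that the amplified gap $(1+\gamma)^q$ dominates the claimed ratio $\rho=2^{\log^{1-\epsilon}N}$, yet small enough that $N=n^{O(q)}$ stays quasi-polynomial. Writing $\beta=\log_2(1+\gamma)>0$ and $\log N=\Theta(q\log n)$, the requirement $(1+\gamma)^q>2^{\log^{1-\epsilon}N}$ becomes, after taking logarithms, $\beta q>\bigl(\Theta(q\log n)\bigr)^{1-\epsilon}$, i.e.\ $q^{\epsilon}=\Omega\bigl((\log n)^{1-\epsilon}\bigr)$. Any fixed polylogarithmic choice above this threshold works; for concreteness $q=\lceil(\log n)^{1/\epsilon}\rceil$ satisfies the inequality for all large $n$ and also respects the constraint $q<2n$ of Lemma~\ref{lsatrrsp}. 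With such a $q$ one has $N=n^{O(q)}=2^{\mathrm{poly}\log n}$, so building $\mathcal{I}^{(q)}$ takes quasi-polynomial time; composing this construction with the assumed polynomial-time approximation algorithm would decide \textsc{3-Sat} in time $N^{O(1)}=2^{\mathrm{poly}\log n}$. As $n$ is polynomial in the size of $\Psi'$, this places NP in DTIME$(n^{\mathrm{poly}\log n})$, contradicting the hypothesis and proving the theorem.
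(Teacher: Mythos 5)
Your proposal is correct and takes essentially the same route as the paper: compose the reduction of Lemma~\ref{lmaxsat} with the gap-amplifying reduction of Lemma~\ref{lsatrrsp} (with $\delta=1/(1+\gamma)$), pick $q$ polylogarithmic in $n$ so that the gap $(1/\delta)^q$ beats $2^{\log^{1-\epsilon}N}$ while $N=n^{O(q)}$ keeps the reduction quasi-polynomial, and conclude NP $\subseteq$ DTIME$(n^{\mathrm{poly}\log n})$. The only difference is cosmetic parametrization: the paper fixes $q=(\log n)^{\beta}$ and derives the admissible range $\epsilon>1/(\beta+1)$, whereas you solve for $q$ directly from $\epsilon$ via $q=\lceil(\log n)^{1/\epsilon}\rceil$; the two are interchangeable.
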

\begin{proof}
The starting point is the gap-preserving reduction from  \textsc{Max 3-Sat} to  \textsc{Rec Rob SP}
(see Lemma~\ref{lsatrrsp}). By combining this reduction with the reduction from  \textsc{3-Sat}
to  \textsc{Max 3-Sat} (see Lemma~\ref{lmaxsat}) we get a reduction from \textsc{3-Sat} to \textsc{Rec Rob SP},
which for some $\delta\in (0,1)$ (e.g., $\delta=1/(1+\gamma))$ that ensures:
 \begin{enumerate}
 \item if $\Psi$ is satisfiable then  $OPT(\mathcal{I}^{(q)})=\frac{1}{m^q}$,
 \item if $\Psi$ is not satisfiable then $OPT(\mathcal{I}^{(q)})>\frac{1}{\delta^q m^q}$.
  \end{enumerate}
 It has been shown  in~\cite{HA01} that~$\delta=\frac{7}{8}$.
  
  Let $q=(\log n)^\beta$ for some fixed $\beta>0$, 
  where $n$ is the number of variables in an instance of~\textsc{3-Sat}. Now,
  the size of~$\mathcal{I}^{(q)}$ is $N=n^{O(q)}=n^{O(\log^{\beta} n)}= 2^{O(\log^{\beta+1} n)}$.
  Therefore, the running time of the reduction is quasi-polynomial $n^{\text{poly}(\log n)}$.
  The resulting gap of~$\mathcal{I}^{(q)}$ is 
  $\left(\frac{1}{\delta}\right)^q=\left(\frac{1}{\delta}\right)^{\log^{\beta} n}= 2^{O(\log^{\beta/\beta+1} N)}$. 
  
  Assume, on the contrary, that there is a polynomial-time algorithm for
   \textsc{Rec Rob SP}  problem with an approximation ratio of $2^{\log^{1-\epsilon} N}$  for any $\epsilon>1-\frac{\beta}{\beta+1}$. 
   Applying this algorithm to the resulting instance~$\mathcal{I}^{(q)}$, one could evaluate whether a formula~$\Psi$ 
   in \textsc{3-SAT} is satisfiable 
    in $O(n^{\mathrm{poly} \log n})$ time. But this would imply NP $\subseteq$ DTIME$(n^{\mathrm{poly} \log n})$,  a contradiction.
\end{proof}

\begin{cor}
The \textsc{Rec Rob SP} problem in  general acyclic digraphs with  any neighborhood (\ref{incl})-(\ref{sym}),
under $\mathcal{U}(\Gamma^c)$, 
% $\Gamma^c>0$,
 is strongly NP-hard and
 not approximable within any constant $\alpha>0$, unless P=NP,
 if $k$ is part of the input.
 \label{corhard}
\end{cor}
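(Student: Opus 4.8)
The plan is to reuse the gap-preserving reduction of Lemma~\ref{lsatrrsp}, but this time with a \emph{constant} value of the parameter $q$ rather than the growing choice $q=(\log n)^\beta$ used in Theorem~\ref{tsatrrsp}. The whole point is that when $q$ is a fixed constant, the instance $\mathcal{I}^{(q)}$ has size $N=n^{O(q)}$, which is polynomial in the size of the underlying \textsc{Max 3-Sat} (equivalently \textsc{3-Sat}) instance, so the composed reduction runs in genuine polynomial time, while the multiplicative gap $(1/\delta)^q$ stays a constant that we can nevertheless push above any prescribed threshold by choosing $q$ large but still constant.

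More precisely, I would first compose the two reductions exactly as in the proof of Theorem~\ref{tsatrrsp}: apply Lemma~\ref{lmaxsat} to a \textsc{3-Sat} instance $\Psi$ and then feed $\tau(\Psi)$ into Lemma~\ref{lsatrrsp} with $\delta=\tfrac{1}{1+\gamma}$ (so $\delta=\tfrac{7}{8}$ using the bound of~\cite{HA01}). This yields, for every fixed positive integer $q$, an instance $\mathcal{I}^{(q)}$ for which a satisfiable $\Psi$ gives $OPT(\mathcal{I}^{(q)})=\tfrac{1}{m^q}$ and an unsatisfiable $\Psi$ gives $OPT(\mathcal{I}^{(q)})>\tfrac{1}{\delta^q m^q}=(1/\delta)^q\cdot\tfrac{1}{m^q}$. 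For strong NP-hardness it suffices to take any fixed $q$ (e.g.\ $q=1$): since $q$ is constant the reduction is polynomial, all arc costs lie in $\{0,1,2\}$, the budget is $\Gamma^c=1$, and the threshold $\tfrac{1}{m^q}$ has polynomially bounded bit-length, so every numeric parameter of the instance is polynomially bounded. Hence a polynomial reduction from the strongly NP-hard \textsc{3-Sat} establishes that \textsc{Rec Rob SP} is strongly NP-hard for each of the neighborhoods~(\ref{incl})--(\ref{sym}).

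For the inapproximability claim I would use the standard gap argument. Fix any constant $\alpha>1$. Since $\delta\in(0,1)$, the quantity $(1/\delta)^q$ grows without bound in $q$, so I can choose a constant $q=q(\alpha)$ with $(1/\delta)^q>\alpha$, for instance $q=\lceil \log_{1/\delta}\alpha\rceil+1$. Suppose, for contradiction, that some polynomial-time algorithm approximates \textsc{Rec Rob SP} within the constant ratio $\alpha$. Running it on $\mathcal{I}^{(q)}$ would return a value below $\alpha\cdot\tfrac{1}{m^q}<(1/\delta)^q\cdot\tfrac{1}{m^q}$ whenever $\Psi$ is satisfiable, and a value at least $OPT(\mathcal{I}^{(q)})>(1/\delta)^q\cdot\tfrac{1}{m^q}$ whenever $\Psi$ is unsatisfiable. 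Comparing the returned value with the threshold $(1/\delta)^q\cdot\tfrac{1}{m^q}$ would therefore decide \textsc{3-Sat} in polynomial time, forcing P${}={}$NP.

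The construction itself carries over verbatim from Lemma~\ref{lsatrrsp}, so there is no genuinely hard new step; the only thing to watch is the bookkeeping that makes the argument honest. I would verify that a \emph{constant} $q$ keeps both the instance size $n^{O(q)}$ and the denominator $m^q$ polynomial, which is precisely what upgrades the conclusion from the DTIME hypothesis of Theorem~\ref{tsatrrsp} (where $q$ had to grow) to the sharper P${}\neq{}$NP hypothesis here. I would also check that the modifications for the arc-exclusion and arc-symmetric-difference neighborhoods described at the end of Lemma~\ref{lsatrrsp}, in particular the bound $r=m^q(q3^q+1)+1$ and the recovery parameters $k=nr+1-q$ and $k=(2n+3)+(nr+1)-2q$, remain polynomial for constant $q$, so that polynomiality of the whole reduction is preserved across all three neighborhoods.
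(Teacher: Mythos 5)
Your proposal is correct and follows essentially the same route as the paper: the paper's proof simply sets $q=\frac{\log \alpha}{\log 1/\delta}$ (a constant) in the reduction of Theorem~\ref{tsatrrsp}, which is exactly your choice $q=\lceil \log_{1/\delta}\alpha\rceil+1$ up to rounding, making the reduction polynomial and the gap exceed $\alpha$. Your additional bookkeeping (constant $q$ keeps $N=n^{O(q)}$ polynomial, all costs in $\{0,1,2\}$ giving strong NP-hardness, and the checks for the other neighborhoods) just spells out what the paper leaves implicit.
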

\begin{proof}
It is sufficient to set~$q=\frac{\log \alpha}{\log 1/\delta}$ in the proof of Theorem~\ref{tsatrrsp}.
\end{proof}

%%%%%%%%%%%%%%%%%%%%%%%%

We now consider discrete budgeted uncertainty. 
First, we recall that the \textsc{Rec Rob SP} problem is weakly NP-hard even for a very restricted class of graphs.
\begin{thm}[\cite{GLW22}]
The \textsc{Rec Rob SP} problem is weakly NP-hard in arc series-parallel digraphs with any neighborhood (\ref{incl})-(\ref{sym}),
under $\mathcal{U}(\Gamma^d)$,
 if $k$ is part of the input.  
\end{thm}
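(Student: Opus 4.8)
The plan is to reduce from \textsc{Partition}, which is weakly NP-complete \cite{GJ79}: given positive integers $a_1,\dots,a_n$ with $\sum_{i\in[n]}a_i=2S$, decide whether some $I\subseteq[n]$ satisfies $\sum_{i\in I}a_i=S$. (This matches the weakly NP-hard attribution to \cite{GLW22}, whose construction I would follow.) First I would build an arc series-parallel digraph $G$ as the series composition of $n$ \emph{item gadgets}, where gadget $i$ is the parallel composition of two arcs $e_i^0,e_i^1$. Every first-stage $s$--$t$ path $X$ then selects exactly one arc from each gadget, so the paths are in one-to-one correspondence with the subsets $I=\{i:e_i^1\in X\}$; this is the same selection mechanism already exploited by the vertical-path gadgets in the proof of Theorem~\ref{thm1}, and series-parallel-ness is immediate from the recursive composition.

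Next I would calibrate the data so that the recoverable-robust value, read as a function of the selected subset $I$, becomes a balance penalty that is minimized exactly when the induced split is even. Concretely, I would give the two arcs of gadget $i$ asymmetric data—a first-stage cost charging weight $a_i$ on one side of the choice together with a second-stage interval of deviation $\Delta_{e}=a_i$ exposing the other side to the adversary—and fix $\Gamma^d$ and the recovery bound $k$ so that the inner game $\max_{\pmb{c}\in\mathcal U(\Gamma^d)}\min_{Y\in\Phi(X,k)}\sum_{e\in Y}c_e$ equals, up to an additive constant, a balance quantity such as $\max\{\sum_{i\in I}a_i,\sum_{i\notin I}a_i\}$. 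Here $\Gamma^d$ limits how many selected arcs the adversary may inflate and $k$ limits how many of those the recovery path may repair, while the weights $a_i$ are exactly the deviations, so the worst case after recovery is driven by the heavier side of the split. Since $\max\{\sum_{i\in I}a_i,\sum_{i\notin I}a_i\}\ge S$ always, with equality iff $\sum_{i\in I}a_i=S$, I would then prove both directions: a balanced partition yields a first-stage path whose objective meets the threshold $S$ plus the constant, and conversely any path meeting that threshold yields, via the selection bijection, a subset summing to $S$.

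I would first carry this out for the arc-inclusion neighborhood~(\ref{incl}) and then transfer it to~(\ref{excl}) and~(\ref{sym}) by the same device used at the end of the proof of Lemma~\ref{lsatrrsp}: subdividing a leading arc in each gadget so that all first-stage paths share a common number of arcs, after which $k$ can be re-expressed for the arc-exclusion and arc-symmetric-difference cases without changing the argument. The main obstacle is precisely the calibration step: one must pin down $\Gamma^d$, $k$, and the arc data so that the two nested optimizations over $\pmb{c}$ and $Y$ collapse to the clean balance quantity for \emph{every} first-stage path, which requires verifying that the adversary never profits from inflating off-path arcs or from spreading its budget across gadgets, and that the defender never profits from an unnatural recovery path that abandons the selection structure. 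Once the inner game is shown to reduce to this balance measure, weak NP-hardness follows because the construction is polynomial in $n$ and only pseudopolynomial in the magnitudes $a_i$, so the numbers $a_i$ are encoded without blow-up and a polynomial-time algorithm for \textsc{Rec Rob SP} would decide \textsc{Partition}.
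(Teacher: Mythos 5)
First, a point of reference: the paper itself contains no proof of this statement --- it is imported verbatim from \cite{GLW22}, so the only proof to measure you against is the reduction in that reference. Your outer scaffolding does mirror its general shape and is sound as far as it goes: \textsc{Partition} is the right source problem for weak NP-hardness \cite{GJ79}; the series composition of $n$ two-arc parallel gadgets is arc series-parallel; first-stage paths are in bijection with subsets $I\subseteq[n]$; and $\max\{\sum_{i\in I}a_i,\sum_{i\notin I}a_i\}\geq S$ with equality iff the split is balanced. Incidentally, your closing subdivision device is unnecessary for this graph: every $s$-$t$ path has exactly $n$ arcs, and in each gadget where $X$ and $Y$ differ each path loses and gains exactly one arc, so $|Y\setminus X|=|X\setminus Y|$ and the neighborhoods (\ref{incl})--(\ref{sym}) coincide up to replacing $k$ by $2k$.

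The genuine gap is the step you yourself flag as ``the main obstacle'': you never exhibit arc data, $\Gamma^d$ and $k$ for which the inner game $\max_{\pmb{c}\in\mathcal{U}(\Gamma^d)}\min_{Y\in\Phi(X,k)}$ collapses to your balance quantity, and this calibration is not a routine verification --- it is the entire content of the theorem. Worse, the specific calibration you gesture at (first-stage cost $a_i$ on one arc of gadget $i$, deviation $\Delta_e=a_i$ on the other) provably does not yield a balance term. Take $C_{e_i^1}=a_i$, $C_{e_i^0}=0$, $\hat{c}\equiv 0$, $\Delta_{e_i^0}=a_i$, $\Delta_{e_i^1}=0$: the adversary inflates a set $D$ of arcs $e_i^0$ with $i\notin I$, the recovery path repairs up to $k$ of these gadgets at zero second-stage cost, so the inner value equals the sum of $\{a_i: i\in D\}$ minus its $k$ largest elements --- an order statistic of $\{a_i : i\notin I\}$, not $\sum_{i\notin I}a_i$ and not anything whose threshold decides \textsc{Partition}. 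At the extremes the construction degenerates instead: if $\Gamma^d$ is large enough to hit every relevant arc and $k$ large enough to repair everything, the value is constant in $X$; if $k=0$ it is likewise trivial. So between ``adversary hits everything'' and ``recovery repairs everything'' the discrete-budget max--min generically produces truncated, order-statistic expressions, and forcing it to equal a clean balance quantity for \emph{every} first-stage path requires a genuinely more careful gadget --- exactly what \cite{GLW22} supply and what your write-up omits. Declaring that you ``would follow'' the construction of \cite{GLW22} is circular in a blind proof; without concrete costs, budgets, recovery parameter, and a verification of both directions of the equivalence, what you have is a plan, not a proof.
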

In the case of more general graphs, the problem becomes significantly more challenging.
\begin{thm}
The \textsc{Rec Rob SP} problem is strongly NP-hard in layered digraphs with  any neighborhood (\ref{incl})-(\ref{sym}),
under $\mathcal{U}(\Gamma^d)$, and it is also not at all approximable, unless P=NP,
 if $k$ is part of the input.  
\end{thm}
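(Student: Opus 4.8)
The plan is to reduce from \textsc{Hamiltonian Path} (strongly NP-complete), reusing verbatim the layered digraph $G'$ built in the proof of Theorem~\ref{thm1}, and to rechoose the costs and the budget so that the optimum collapses to a $0$-versus-positive gap — which is what simultaneously yields strong NP-hardness and total inapproximability. Concretely, I keep all first-stage costs equal to $0$, give every vertical arc the second-stage interval $[0,1]$ (nominal $0$, deviation $1$), and give every diagonal arc a fixed positive second-stage cost $M>0$ (nominal $M$, deviation $0$). Thus diagonal arcs never consume the discrete budget and can never lie on a cost-$0$ recovery path. I set the discrete budget to $\Gamma^d=n-1$ and keep the recovery parameter $k$ exactly as in Theorem~\ref{thm1} ($k=2n$ for arc-inclusion and arc-exclusion, $k=4n$ for the symmetric difference). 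The single property I exploit is that, for each of the three neighborhoods and these values of $k$, a vertical path $P_i$ lies in $\Phi(X,k)$ if and only if $X$ uses at least one arc of $P_i$; this ``available iff intersecting'' equivalence follows from the same arc-counting used in Theorem~\ref{thm1} and is the only feature of the neighborhood I need.

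Next I would prove the forward direction. If $G$ has a Hamiltonian path, choose the first-stage path $X$ exactly as in Theorem~\ref{thm1}; it uses at least one arc of every vertical path $P_1,\dots,P_n$, so all $n$ of them are available for recovery. Since the adversary may raise at most $\Gamma^d=n-1$ arc costs and the vertical paths are pairwise arc-disjoint, in every scenario at least one $P_j$ retains all its arcs at nominal cost $0$; recovering to $P_j$ costs $0$, so the robust value of $X$ is $0$ and $OPT=0$.

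For the reverse direction I would invoke the counting argument of Theorem~\ref{thm1}: any $s$-$t$ path has exactly $2n+1$ arcs and can contain at most $n-1$ diagonal arcs, so it meets all $n$ vertical paths only when its diagonal arcs realize a Hamiltonian path of $G$. Hence, when $G$ has no Hamiltonian path, every first-stage path $X$ misses some vertical path and therefore intersects at most $n-1$ of them. The adversary then spends its budget of $n-1$ by raising one arc on each intersected vertical path; afterwards every available vertical recovery path carries a raised arc (cost $\ge 1$) and every diagonal-using path costs $\ge M$, so each $Y\in\Phi(X,k)$ costs at least $\min(1,M)>0$. Thus the robust value of every $X$ is at least $\min(1,M)$, giving $OPT\ge\min(1,M)>0$. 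The step I expect to require the most care is precisely this one — verifying that $n-1$ discrete raises suffice to block every cost-$0$ recovery simultaneously — which rests on the vertical paths being disjoint and on the ``available iff intersecting'' equivalence holding uniformly for all three neighborhoods.

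Finally I would assemble the consequences. Because $OPT=0$ in yes-instances and $OPT\ge\min(1,M)>0$ in no-instances, any polynomial-time approximation algorithm with a finite ratio $\rho$ would have to return a solution of value $\le\rho\cdot 0=0$ on yes-instances and of value $>0$ on no-instances, and hence would decide \textsc{Hamiltonian Path} in polynomial time; this gives inapproximability within every factor unless P$=$NP. Since all numerical data ($\Gamma^d=n-1$ and the constant costs $0,1,M$) are polynomially bounded and \textsc{Hamiltonian Path} is strongly NP-complete, the same reduction establishes strong NP-hardness. The modifications for the arc-exclusion and arc-symmetric difference neighborhoods are identical to those already made in Theorem~\ref{thm1}, so no separate argument is needed.
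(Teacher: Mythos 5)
Your proposal is correct and follows essentially the same route as the paper: the same graph $G'$ and recovery parameters from Theorem~\ref{thm1}, the budget $\Gamma^d=n-1$, the ``$P_i\in\Phi(X,k)$ iff $X$ intersects $P_i$'' counting argument, and the resulting $0$-versus-positive gap that yields both strong NP-hardness and total inapproximability. The only (immaterial) difference is that you place the uncertainty interval $[0,1]$ on \emph{all} vertical arcs, whereas the paper puts it only on the $n$ dashed arcs $(s,v_i^1)$ and fixes the remaining vertical arcs at $[0,0]$; both choices support the identical adversary argument since the vertical paths are arc-disjoint.
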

\begin{proof}
Given an instance $G$ of \textsc{Hamiltonian Path}, the construction of $G'$ is the same as in the proof of Theorem~\ref{thm1}. We again set the recovery parameter $k=2n$ for the arc-inclusion (arc-exclusion) neighborhood and $k=4n$ for the arc-symmetric difference neighborhood. The first-stage costs of all arcs in $G'$  are~0. The second-stage costs of all vertical arcs in $G'$ are in the interval $[0,0]$, except for the \emph{dashed} arcs $(s,v_i^1)$, $i\in [n]$, whose second-stage costs are in the interval $[0,1]$ (see the dashed arcs in Figure~\ref{fig1}). The second-stage costs of all diagonal arcs are in the interval $[1,1]$.
 Let $\Gamma^d=n-1$. We claim that there is a Hamiltonian path in $G$ if and only if there is a solution to 
 \textsc{Rec Rob SP} in~$G'$ with the cost equal to~0.
 The proof of this claim is based on the same reasoning as in the proof of Theorem~\ref{thm1}. Graph $G$ has a Hamiltonian path if and only if there is a first-stage path $X$ in $G'$ that uses at least one arc from each vertical path $P_1,\dots, P_n$.
Because, in the worst second-stage cost scenario, we can set the cost of at most $n-1$ dashed arcs to~1, $G$ has a Hamiltonian path if and only if there is a second-stage path in $\Phi(X,k)$ with the cost equal to~0. Using the fact that the first-stage cost of each solution equals~0, we are done.
\end{proof}

\subsubsection*{Acknowledgements}
The authors were supported by
 the National Science Centre, Poland, grant 2022/45/B/HS4/00355.

%\bibliographystyle{abbrv}
%\bibliography{robust}

\end{document}